\newtheorem{theorem}{Theorem}
\newtheorem{corollary}{Corollary}
\begin{document}
%%%%%%%%%%%%%%%%%%%%%%%%%
\title{
	On maximizing VAM for a given power \\
	{\Large Slope, cadence, force and gear-ratio considerations}
}
%%%%%%%%%%%%%%%%%%%%%%%%%
\author{
	Len Bos%
	\footnote{
		Universit\'a di Verona, Italy: {\tt leonardpeter.bos@univr.it}
	}\,,
	Michael A. Slawinski%
	\footnote{
		Memorial University of Newfoundland, Canada: {\tt mslawins@mac.com}; {\tt theodore.stanoev@gmail.com}
	}\,,
	\addtocounter{footnote}{-1}
	Theodore Stanoev\footnotemark
}
\date{}
%%%%%%%%%%%%%%%%%%%%%%%%%
\maketitle
%%%%%%%%%%%%%%%%%%%%%%%%%
\begin{abstract}
%%%%%%%%%%%%%%%%%%%%%%%%%
The {\it velocit\`a ascensionale media} (VAM) is measurement that quantifies a cyclist's climbing ability.
It depends on both the ground speed of a bicycle-cyclist system and the slope of an incline.
To maximize the ascent speed, the solution to the brachistochrone problem determines that the optimal curve for the incline is a straight line, which is a hill with a constant slope.
The maximum obtainable VAM value by a cyclist increases monotonically with the slope of the incline, but is limited by the maximum sustainable power.
These properties\,---\,which are theorems stemming from a standard mathematical model to account for the power required to propel a bicycle\,---\,constitute a mathematical-physics background upon which various strategies for the VAM maximization can be examined in the context of the maximum sustainable power as a function of both the gear ratio and cadence.
Recently established records provide an empirical support for these analytical results, which are based on theoretical considerations.
%%%%%%%%%%%%%%%%%%%%%%%%%
\end{abstract}
%%%%%%%%%%%%%%%%%%%%%%%%%
%\tableofcontents
%%%%%%%%%%%%%%%%%%%%%%%%%
\section{Introduction}
%%%%%%%%%%%%%%%%%%%%%%%%%
\citet{Ferrari2003} introduced {\it velocit\`a ascensionale media} (VAM), mean ascent velocity, which is a measurement of the rate of ascent and quantifies a cyclist's climbing ability.

Commonly, VAM is built-in into a GPS bike computer~\citep[e.g.,][]{PolarM460}, as an instantaneous information, and is available {\it a posteriori} for many Strava segments~\citep{Strava2018}.
It is used in training manuals~\citep[e.g.,][]{Rotondi2012} and touring books~\citep[e.g.,][]{Barrett2016}.
An application to racing strategies is described by~\citet{Gifford2011}, for Stage~16 of the Tour de France in 2000.
Ferrari advised the USPS team not to chase Marco Pantani, in spite of a large gap, based on VAM measurements, which he considered unsustainable.
Ferrari's prediction was correct: Pantani was overtaken by the peloton.
Recently, due to the popularity of Everesting, VAM has become a point of conversation among cyclists eager to attempt this challenge.

In spite of its presence within the cycling community, VAM is rarely considered as a  research topic.
At the time of this article, an advanced search on Google Scholar for articles containing the exact phrase ``mean ascent velocity'' yields fewer than one hundred results, while restricting the search to include ``cycling'' yields fewer than ten. 
Therein, the sole consideration of VAM is by~\citet{CintiaEtAl2013}, for data mining using Strava.

VAM is a value devoid of information concerning the conditions under which it is obtained.
For instance, the same VAM value can be obtained by slowly climbing a steep hill or by riding quickly a moderate incline.
In this article, we use mathematical physics to provide a novel approach to enhance our understanding of VAM and to examine conditions for its maximization.

We begin this article by presenting a model that accounts for the power to maintain a given ground speed. 
Then, we prove that the maximization of VAM requires a constant slope, as a consequence of solving the brachistochrone problem.
We proceed by using the model to express VAM in terms of ground speed and slope, and prove, for a fixed power output, that VAM increases monotonically with the slope.
To gain an insight into these results, we discuss a numerical example, wherein we vary the values of cadence and gear ratio.
We conclude by suggesting possible consequences of our results for VAM-maximization strategies.
%%%%%%%%%%%%%%%%%%%%%%%%%
\section{Method}
%%%%%%%%%%%%%%%%%%%%%%%%%
In this article, we use methods rooted in mathematical physics and logical proof to determine the conditions required to maximize VAM, which we demonstrate in two steps.

The first step is to obtain the curve that minimizes the time between two points along an incline.
In other words, we consider the brachistochrone problem to determine the optimal hill conditions that minimize the ascent time.
The solution to the problem, which is steeped in the history of physics, laid the groundwork for the mid-eighteenth-century development of the calculus of variations by Leonhard Euler and Joseph-Louis Lagrange~\citep[e.g.,][pp.~19--20]{Coopersmith2017}.
In our solution, we avail of these developments and use the Euler-Lagrange equation to determine that the brachistochrone for the ascent is a straight line, which, given the nature of typical solutions to brachistochrone problems, is an unexpected result.

The second step requires a mathematical formulation to model a cyclist climbing a straight-line incline.
To that end, we consider a model that accounts for the power required to overcome the forces, $F_{\!\leftarrow}$\,, that oppose the translational motion of a bicycle-cyclist system along a hill of constant slope with ground speed~$V$~\citep[e.g.,][]{DanekEtAl2021}.
The model is
\begin{align}
	\label{eq:PV}
	P
	&=
	F_{\!\leftarrow}\,V
	\\
	\nonumber 
	&=
	\quad\frac{
		\overbrace{\vphantom{\left(V\right)^2}\,m\,g\sin\theta\,}^\text{change in elevation}
		+
		\!\!\!\overbrace{\vphantom{\left(V\right)^2}\quad m\,a\quad}^\text{change in speed}
		+
		\overbrace{\vphantom{\left(V\right)^2}
			{\rm C_{rr}}\!\!\!\underbrace{\,m\,g\cos\theta}_\text{normal force}
		}^\text{rolling resistance}
		+
		\overbrace{\vphantom{\left(V\right)^2}
			\,\tfrac{1}{2}\,{\rm C_{d}A}\,\rho\,
			(
				\!\!\underbrace{
					V+w_\leftarrow
				}_\text{air-flow speed}\!\!
			)^{2}\,
		}^\text{air resistance}
	}{
		\underbrace{\quad1-\lambda\quad}_\text{drivetrain efficiency}
	}\,V\,,
\end{align}
where $m$~is the mass of the bicycle-cyclist system, $g$~is the acceleration due to gravity, $\theta$ is the slope of the incline, $a$~is the change of speed, $\rm C_{rr}$~is the rolling-resistance coefficient, $\rm C_{d}A$~is the air-resistance coefficient, $\rho$~is the air density, $w_{\leftarrow}$~is the wind component opposing the motion and $\lambda$ is the drivetrain-resistance coefficient.
Considering an idealized setup\,---\,that is, a steady ride,~$a=0\,{\rm m/s^2}$\,, in windless conditions,~$w=0\,{\rm m/s}$\,---\,we write model~\eqref{eq:PV} as
\begin{equation}
	\label{eq:P}
	P=\dfrac{mg\left({\rm C_{rr}}\cos\theta+\sin\theta\right)+\tfrac{1}{2}\,{\rm C_{d}A}\,\rho\,V^2}{1-\lambda}V\,.
\end{equation}

Finally, to present the theorems associated with the first and second steps, we define 
\begin{equation}
\label{eq:VAM}
{\rm VAM} := 3600\,V(\theta)\,\sin\theta\,,
\end{equation}
which is stated in metres per hour.
Thus, for a given power, $P = P_0$\,, and using expression~\eqref{eq:P}, we write
\begin{equation}
\label{eq:PowerConstraint}
V^{3}+A(\theta)\,V + B = 0\,,
\end{equation}
where
\begin{equation}
	\label{eq:a}
	A(\theta) = \frac{mg\left({\rm C_{rr}}\cos\theta+\sin\theta\right)}{\tfrac{1}{2}{\rm C_{d}A}\,\rho}
\end{equation}
and
\begin{equation}
	\label{eq:b}
	B = -\frac{\left(1-\lambda\right)P_{0}}{\tfrac{1}{2}{\rm C_{d}A}\,\rho}\,.
\end{equation}

%Since expressions~(\ref{eq:PV}) and (\ref{eq:P}) are commonly expressed in the {\it SI} units, let us define
%As we proceed to prove, for a given value of~$P$\,, VAM is a monotonic function of~$\theta$\,; hence, there is no maximum.

%%%%%%%%%%%%%%%%%%%%%%%%%
\section{Results}
%%%%%%%%%%%%%%%%%%%%%%%%%
\subsection{Brachistochrone problem}
\label{sec:brachistochrone}
%%%%%%%%%%%%%%%%%%%%%%%%%
Let us consider the brachistochrone problem, which, herein, consists of finding the curve along which the ascent between two points\,---\,under the assumption of fixed power\,---\,takes the least amount of time.
To solve the problem, we prove the following theorem.\\
\begin{theorem}
	\label{thm:Brachistochrone}
	The path of quickest ascent from $(0,0)$ to $(R,H)$\,, which is an elevation gain of~$H$ over a horizontal distance of~$R$\,, is the straight line, $y = (H/R)\,x$\,, where $0\leqslant x\leqslant R$ and $y(R) = H$\,.
\end{theorem}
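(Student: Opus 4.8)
The plan is to express the ascent time as a functional of the path and to exploit the distinctive feature of the fixed-power setting: the ground speed is dictated entirely by the local slope. Restricting attention to paths that are graphs $y=y(x)$ over $0\leqslant x\leqslant R$, and recalling that $V$ is the speed measured along the incline, the traversal time is
\[
T[y]=\int_0^R\frac{{\rm d}s}{V}=\int_0^R\frac{\sqrt{1+(y')^2}}{V}\,{\rm d}x\,,
\]
where ${\rm d}s=\sqrt{1+(y')^2}\,{\rm d}x$. The key observation is that the local slope obeys $\tan\theta=y'$, so $\sin\theta$ and $\cos\theta$\,---\,and therefore the coefficient $A(\theta)$ of~\eqref{eq:a}\,---\,are functions of $y'$ alone; since $B$ in~\eqref{eq:b} is constant, the cubic~\eqref{eq:PowerConstraint} determines the admissible positive root $V=V(\arctan y')$ as a function of $y'$ only. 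This is exactly where the problem parts ways with the classical brachistochrone: here the fixed power, rather than energy conservation, sets the speed, so $V$ depends on the local inclination and not on the height $y$.

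Consequently the integrand $L(y'):=\sqrt{1+(y')^2}\big/V(\arctan y')$ contains neither $x$ nor $y$ explicitly. For such an autonomous Lagrangian the Euler--Lagrange equation
\[
\frac{{\rm d}}{{\rm d}x}\frac{\partial L}{\partial y'}-\frac{\partial L}{\partial y}=0
\]
reduces, because $\partial L/\partial y=0$, to $\tfrac{{\rm d}}{{\rm d}x}\big(\partial L/\partial y'\big)=0$; hence $\partial L/\partial y'$ is constant along any extremal. As $\partial L/\partial y'$ is itself a function of $y'$ alone, a constant value pins $y'$ to a constant, provided that map is injective. Integrating $y'=\text{const}$ and imposing $y(0)=0$ and $y(R)=H$ then forces $y=(H/R)\,x$, the claimed straight line.

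The hard part will be justifying that $\partial L/\partial y'$ is injective in $y'$\,---\,equivalently, that $L$ is strictly convex, so that $\partial^2 L/\partial(y')^2$ never vanishes on the admissible range. This is delicate because $V(\theta)$ is available only implicitly through~\eqref{eq:PowerConstraint}; I would differentiate $V^3+A(\theta)V+B=0$ to obtain ${\rm d}V/{\rm d}\theta$, confirm that the physical branch with $V>0$ is smooth and monotonically decreasing in $\theta$, and carry this through the substitution $\theta=\arctan y'$ to verify strict monotonicity of $\partial L/\partial y'$. A cleaner route that sidesteps critical-point bookkeeping is to prove convexity of $L$ outright and then invoke Jensen's inequality,
\[
\frac1R\int_0^R L(y')\,{\rm d}x\geqslant L\!\left(\frac1R\int_0^R y'\,{\rm d}x\right)=L\!\left(\frac{H}{R}\right)\,,
\]
whose right-hand side is the time for the straight line and whose equality case, $y'\equiv H/R$, singles it out as the unique global minimiser. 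Either way, the entire argument rests on the regularity and convexity of the implicitly defined speed profile $V(\theta)$.
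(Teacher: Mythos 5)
Your setup coincides with the paper's: both write the ascent time as $\int_0^R \sqrt{1+(y')^2}\,/\,V(\arctan y')\,{\rm d}x$, observe that the integrand depends explicitly on neither $x$ nor $y$, and conclude from the Euler--Lagrange equation that $\partial L/\partial y'$ is constant along an extremal. The divergence\,---\,and the gap\,---\,is in the final step. You reduce the theorem to the claim that $y'\mapsto \partial L/\partial y'$ is injective (equivalently, that $L$ is strictly convex), you correctly flag this as ``the hard part,'' and you do not prove it. The route you sketch\,---\,show that the positive root $V(\theta)$ of~\eqref{eq:PowerConstraint} is smooth and decreasing, then carry this through the substitution\,---\,does not obviously close the gap: with $p=\tan\theta$ one has $\partial L/\partial y' = \cos^{2}\theta\,\frac{\rm d}{{\rm d}\theta}\frac{1}{\cos\theta\,V(\theta)}$, and monotonicity of $V$ alone does not yield monotonicity of this expression; you would need control of second derivatives of the implicitly defined $V$, a computation you have not done. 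Your Jensen alternative rests on the same unproven convexity.

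The paper closes this step differently, and in a way that requires much less. It factors the Euler--Lagrange equation as
\[
\frac{y''}{1+(y')^{2}}\,\frac{\rm d}{{\rm d}\theta}\left(\cos^{2}\theta\,\frac{\rm d}{{\rm d}\theta}\frac{1}{\cos\theta\,V(\theta)}\right)=0
\]
and rules out the second factor vanishing identically by integrating that ODE in closed form: it would force $V(\theta)=1/(a\sin\theta+b\cos\theta)$, which blows up at a zero of the denominator, whereas every root of the cubic~\eqref{eq:PowerConstraint} satisfies $|V|\leqslant\max\{1,|A(\theta)|+|B|\}$. So the paper never needs global injectivity of $\partial L/\partial y'$; it only needs the conserved quantity not to be constant as a function of $\theta$ on an interval, which the explicit integration plus the boundedness of the cubic's roots delivers. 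To repair your argument, either carry out the convexity computation in full or adopt this integration-plus-boundedness exclusion of the degenerate branch. It is worth noting that your Jensen variant, if the convexity were actually established, would prove global minimality rather than mere stationarity\,---\,a genuine strengthening over the paper\,---\,but as written the decisive step is missing.
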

\begin{proof}
	The relation between speed, slope and power is stated by expressions~\eqref{eq:PowerConstraint},~\eqref{eq:a} and~\eqref{eq:b}, which determines the ground speed $V=V(\theta)$\,, where $\theta$ is the slope, namely,
	\begin{equation*}
		\tan\theta = \frac{{\rm d}y}{{\rm d}x}\,,\quad
		\theta = \arctan\left(\frac{{\rm d}y}{{\rm d}x}\right)\,.
	\end{equation*}
	The traveltime over an infinitesimal distance, ${\rm d}s$\,, is ${\rm d}t = {\rm d}s/V$\,.
	Since ${\rm d}s^2={\rm d}x^2+{\rm d}y^2$\,, we write
	\begin{equation*}
		{\rm d}t = \frac{\sqrt{1+\left(\dfrac{{\rm d}y}{{\rm d}x}\right)^{2}}\,{\rm d}x}{V(\theta)}
		= \frac{\sqrt{1+\left(y'\right)^{2}}}{V(\arctan(y'))}\,{\rm d}x\,,
	\end{equation*}
	where $y':={\rm d}y/{\rm d}x$\,.
	Hence, the ascent time is
	\begin{equation*}
		T = \int\limits_0^R\frac{\sqrt{1+\left(y'\right)^{2}}}{V(\arctan(y'))}\,{\rm d}x
		=:\int\limits_0^R\underbrace{\frac{\sqrt{1+p^{2}}}{V(\arctan(p))}}_F\,{\rm d}x\,,
	\end{equation*}
	where $p:=y'(x)$\,.
	Thus, we write the integrand as a function of three independent variables,
	\begin{equation*}
		T=: \int\limits_0^RF(x,y,p)\,{\rm d}x\,.
	\end{equation*}
	To minimize  the traveltime, we invoke the Euler-Lagrange equation,
	\begin{equation*}
		\frac{\partial}{\partial y}F(x,y,p) = \frac{\rm d}{{\rm d}x}\frac{\partial}{\partial p}F(x,y,p)\,.
	\end{equation*}
	Since $F$ is not an explicit function of $y$\,, the left-hand side is zero,
	\begin{equation}
	\label{eq:ConsQuant}
		0 = \frac{\rm d}{{\rm d}x}\left\{\frac{\partial}{\partial p}\frac{\sqrt{1+p^{2}}}{V(\arctan(p))}\right\}\,,
	\end{equation}
	which means that the term in braces is a constant.
	To proceed, we use the fact that
	\begin{equation*}
		p = \tan\theta \implies \frac{\partial}{\partial p} = \frac{{\rm d}\theta}{{\rm d}p}\frac{\partial}{\partial\theta} = \cos^{2}\theta\,\frac{\partial}{\partial\theta}
	\end{equation*}
	and
	\begin{align*}
		\frac{\partial}{\partial p}\frac{\sqrt{1+p^{2}}}{V(\arctan(p))}
		&= \cos^{2}\theta\,\frac{\partial}{\partial\theta}\frac{\sqrt{1+p^{2}}}{V(\arctan(p))}
		= \cos^{2}\theta\,\frac{\partial}{\partial\theta}\frac{\sqrt{1+\tan^{2}\theta}}{V(\theta)}
		= \cos^{2}\theta\,\frac{\partial}{\partial\theta}\frac{\sqrt{\sec\theta}}{V(\theta)} \\
		&= \cos^{2}\theta\,\frac{\rm d}{{\rm d}\theta}\frac{1}{\cos\theta\,V(\theta)}\,,
	\end{align*}
	for $0\leqslant\theta\leqslant\pi/2$\,.
	Thus, expression~(\ref{eq:ConsQuant}) is
	\begin{equation*}
		\frac{\rm d}{{\rm d}x}\left(\cos^{2}\theta\,\frac{\rm d}{{\rm d}\theta}\frac{1}{\cos\theta\,V(\theta)}\right)\equiv0\,,
	\end{equation*}
	which, using the fact that
	\begin{equation*}
		\theta = \arctan(y'(x)) \implies \frac{{\rm d}\theta}{{\rm d}x} = \frac{1}{1+(y'(x))^{2}}\,y''(x)
		\quad{\rm and}\quad
		\frac{\rm d}{{\rm d}x} = \frac{{\rm d}\theta}{{\rm d}x}\frac{\rm d}{{\rm d}\theta}\,,
	\end{equation*}
	we write as
	\begin{equation*}
		\frac{y''(x)}{1+(y'(x))^{2}}\frac{\rm d}{{\rm d}\theta}\left(\cos^{2}\theta\,\frac{\rm d}{{\rm d}\theta}\frac{1}{\cos\theta\,V(\theta)}\right)\equiv0\,.
	\end{equation*}
	Hence, there are two possibilities.
	Either
	\begin{equation*}
		\underbrace{y''(x)\equiv0}_{\rm(a)} 
		\qquad{\rm or}\qquad
		\underbrace{\quad\frac{\rm d}{{\rm d}\theta}\left(\cos^{2}\theta\,\frac{\rm d}{{\rm d}\theta}\frac{1}{\cos\theta\,V(\theta)}\right)\equiv0}_{\rm(b)}\,.
	\end{equation*}
	We claim that (b) is impossible.
	Indeed, (b) holds if and only if
	\begin{equation*}
		\cos^{2}\theta\,\frac{\rm d}{{\rm d}\theta}\frac{1}{\cos\theta\,V(\theta)}\equiv a\,,
	\end{equation*}
	where $a$ is a constant.
	It follows that
	\begin{equation*}
		\frac{\rm d}{{\rm d}\theta}\frac{1}{\cos\theta\,V(\theta)}= a\sec^{2}\theta
		\iff\frac{1}{\cos\theta\,V(\theta)}= a\tan\theta + b\,,
	\end{equation*}
	where $a$ and $b$ are  constants, and
	\begin{equation*}
		\cos\theta\,V(\theta)=\frac{1}{a\tan\theta + b}
		\iff V(\theta)=\frac{1}{a\sin\theta + b\cos\theta}\,.
	\end{equation*}
	This cannot be the case, since it implies that there exists a value of $\theta^{\ast}$ for which $a\sin\theta^{\ast} + b\cos\theta^{\ast} = 0$ and $V(\theta^{\ast})\to\infty$\,.
	But $V^{3}+A(\theta)\,V+B=0$\,, so any root of this cubic equation is such that 
	\begin{equation*}
		|V|\leqslant\max\{1,|A(\theta)| + |B|\}\,.
	\end{equation*}
	Thus, it follows that (a), namely, $y''(x)\equiv0$\,, must hold, which implies that $y(x)$ is a straight line.
	For the ascent from $(0,0)$ to $(R,H)$\,, the line is $y(x) = (H/R)\,x$\,, as required.
\end{proof}

In the context of VAM, there is a corollary of Theorem~\ref{thm:Brachistochrone}.\\
\begin{corollary}
	The path of the quickest gain of altitude between two points, $(0,0)$ and $(R,H)$\,, is the straight line, $y = (H/R)\,x$\,, where $0\leqslant x\leqslant R$ and $y(R) = H$\,.
\end{corollary}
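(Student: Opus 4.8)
The plan is to obtain the corollary directly from Theorem~\ref{thm:Brachistochrone} by recognizing that, once the endpoints $(0,0)$ and $(R,H)$ are fixed, ``quickest gain of altitude'' and ``quickest ascent'' are the same optimization problem. First I would make precise the global quantity to be maximized. Since VAM, as defined in~\eqref{eq:VAM}, is the instantaneous vertical speed $V(\theta)\sin\theta$ scaled by the unit-conversion factor, the natural path-wise analogue is the mean rate of altitude gain over the whole climb,
\begin{equation*}
	\overline{\rm VAM} = 3600\,\frac{H}{T}\,,
\end{equation*}
where $H$ is the total elevation gained between the two endpoints and $T=\int_0^R F(x,y,p)\,{\rm d}x$ is the total ascent time appearing in the proof of Theorem~\ref{thm:Brachistochrone}.

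The key step is to observe that, because the endpoints are held fixed, the numerator $H$ is a constant independent of the chosen curve $y(x)$; only the denominator $T$ depends on the path. Consequently, over all admissible paths joining $(0,0)$ to $(R,H)$, maximizing $\overline{\rm VAM}$ is equivalent to minimizing $T$. This equivalence is the crux of the argument: it converts the altitude-gain problem into the traveltime-minimization problem that has already been solved.

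I would then invoke Theorem~\ref{thm:Brachistochrone}, which establishes that the straight line $y=(H/R)\,x$ minimizes $T$. By the equivalence just noted, this same straight line maximizes $\overline{\rm VAM}$ and is therefore the path of quickest altitude gain, as claimed. I do not expect a substantive obstacle, since the content is a reinterpretation of Theorem~\ref{thm:Brachistochrone} rather than a fresh variational computation. The only point requiring care is articulating precisely what ``quickest gain of altitude'' should mean for a path and confirming that, with $H$ fixed, it coincides exactly with the time-minimization already carried out; once the mean rate $3600\,H/T$ is identified as the quantity of interest, the reduction is immediate.
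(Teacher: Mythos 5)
Your argument is correct and matches the paper's (implicit) reasoning exactly: the paper offers no separate proof, treating the corollary as immediate from Theorem~\ref{thm:Brachistochrone} precisely because the elevation gain $H$ is fixed by the endpoints, so maximizing the mean rate of altitude gain $3600\,H/T$ reduces to minimizing $T$. Your write-up simply makes that reduction explicit, which is the same approach.
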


%%%%%%%%%%%%%%%%%%%%%%%%%
\subsection{Monotonic increase of VAM}
\label{sec:VAMmaximization}
%%%%%%%%%%%%%%%%%%%%%%%%%
Since the path of quickest ascent is a straight line, let us consider model~\eqref{eq:P} to determine the behaviour of VAM\,---\,under the assumption of fixed power\,---\,for slopes of interest.
To determine the behaviour, we prove the following theorem.\\
\begin{theorem}
	\label{thm:Theorem}
	For a given power, $P = P_0$\,, $V(\theta)\sin\theta$ increases monotonically for $0\leqslant\theta\leqslant\pi/2$\,.
\end{theorem}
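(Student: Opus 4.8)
The plan is to treat $V=V(\theta)$ as the function defined implicitly by the power constraint~\eqref{eq:PowerConstraint} and to show directly that $\tfrac{\rm d}{{\rm d}\theta}\bigl(V(\theta)\sin\theta\bigr)>0$ on $[0,\pi/2]$. First I would settle that $V(\theta)$ is a well-defined, positive, differentiable function. Writing $g(V)=V^{3}+A(\theta)\,V+B$, I note that for the physical parameter ranges $A(\theta)>0$ by~\eqref{eq:a} and $B<0$ by~\eqref{eq:b} (since $P_0>0$), so $g'(V)=3V^{2}+A(\theta)>0$ for all $V$: the cubic is strictly increasing in $V$, hence has a single real root, and because $g(0)=B<0$ that root is positive. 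As $3V^{2}+A(\theta)$ never vanishes, the implicit function theorem guarantees that $V(\theta)$ is differentiable.

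Next I would obtain $V'(\theta)$ by differentiating the constraint implicitly. From $V^{3}+A(\theta)\,V+B=0$ one gets $\bigl(3V^{2}+A(\theta)\bigr)V'+A'(\theta)\,V=0$, so
\begin{equation*}
	V'(\theta)=\frac{-A'(\theta)\,V}{3V^{2}+A(\theta)}\,,
\end{equation*}
where, from~\eqref{eq:a}, $A'(\theta)=c\,(\cos\theta-{\rm C_{rr}}\sin\theta)$ with $c:=mg/(\tfrac12{\rm C_{d}A}\,\rho)>0$ and $A(\theta)=c\,({\rm C_{rr}}\cos\theta+\sin\theta)$. Setting $f(\theta)=V(\theta)\sin\theta$, I then compute
\begin{equation*}
	f'(\theta)=V'(\theta)\sin\theta+V(\theta)\cos\theta
	=V(\theta)\,\frac{-A'(\theta)\sin\theta+\bigl(3V^{2}+A(\theta)\bigr)\cos\theta}{3V^{2}+A(\theta)}\,.
\end{equation*}

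The decisive step is the simplification of the numerator $-A'(\theta)\sin\theta+\bigl(3V^{2}+A(\theta)\bigr)\cos\theta$. Substituting the explicit forms of $A(\theta)$ and $A'(\theta)$, the two terms carrying $\sin\theta\cos\theta$ cancel exactly, and the remaining $\sin^{2}\theta$ and $\cos^{2}\theta$ contributions collapse via $\sin^{2}\theta+\cos^{2}\theta=1$, leaving $c\,{\rm C_{rr}}+3V^{2}\cos\theta$. Hence
\begin{equation*}
	f'(\theta)=V(\theta)\,\frac{c\,{\rm C_{rr}}+3V^{2}\cos\theta}{3V^{2}+A(\theta)}\,,
\end{equation*}
which is manifestly positive for $0\leqslant\theta\leqslant\pi/2$, since there $V>0$, $\cos\theta\geqslant0$, and $c,{\rm C_{rr}}>0$, while the denominator $3V^{2}+A(\theta)$ is positive as already noted.

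I expect this cancellation to be the whole substance of the argument: the ungainly cross terms must vanish identically for the inequality to hold, and everything else is routine. The only mild subtlety is the endpoint $\theta=\pi/2$, where $\cos\theta=0$ but the residual term $c\,{\rm C_{rr}}$ keeps $f'$ strictly positive, so monotonicity persists all the way up to the vertical. Concluding $f'(\theta)>0$ on $[0,\pi/2]$ then establishes that $V(\theta)\sin\theta$ increases monotonically, as claimed.
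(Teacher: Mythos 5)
Your proposal is correct and follows essentially the same route as the paper: implicit differentiation of the cubic constraint to get $V'(\theta)=-A'(\theta)V/(3V^{2}+A(\theta))$, followed by the key cancellation $-A'(\theta)\sin\theta+A(\theta)\cos\theta=mg\,{\rm C_{rr}}/(\tfrac12{\rm C_{d}A}\,\rho)$. Your preliminary paragraph establishing that the cubic has a unique positive root and that $V(\theta)$ is differentiable is a welcome addition of rigor that the paper leaves implicit, but it does not change the substance of the argument.
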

\begin{proof}
	Expressions~\eqref{eq:PowerConstraint},~\eqref{eq:a} and~\eqref{eq:b} determine $V=V(\theta)$\,.
	Then,
	\begin{equation}
		\label{eq:d_Vsintheta_dtheta_temp1}
		\frac{\rm d}{{\rm d}\theta}\left(V(\theta)\sin\theta\right) = V'(\theta)\sin\theta + V(\theta)\cos\theta\,,
	\end{equation}
	where ${}'$ is a derivative with respect to $\theta$\,.
	The derivative of expression~\eqref{eq:PowerConstraint} is
	\begin{equation*}
		3\,V^{2}(\theta)V'(\theta) + A'(\theta)\,V(\theta) + A(\theta)V'(\theta)=0\,,
	\end{equation*}
	from which it follows that
	\begin{equation}
	\label{eq:V'(theta)}
		V'(\theta) = -\frac{A'(\theta)V(\theta)}{3\,V^{2}(\theta)+A(\theta)}\,.
	\end{equation}
	Using expression~\eqref{eq:V'(theta)} in expression~\eqref{eq:d_Vsintheta_dtheta_temp1}, and simplifying, we obtain
	\begin{equation}
		\label{eq:d_Vsintheta_dtheta_temp2}
		\frac{\rm d}{{\rm d}\theta}\left(V(\theta)\sin\theta\right) = V(\theta)\,\frac{3\,V^{2}(\theta)\cos\theta-A'(\theta)\sin\theta+A(\theta)\cos\theta}{3\,V^{2}(\theta)+A(\theta)}\,.
	\end{equation}
	Upon using expression~\eqref{eq:a}, the latter two terms in the numerator of expression~\eqref{eq:d_Vsintheta_dtheta_temp2} simplify to
	\begin{align*}
		-A'&(\theta)\sin\theta+A(\theta)\cos\theta \\
		&= \frac{mg}{\tfrac{1}{2}{\rm C_{d}A}\,\rho}\Big(-\left(\cos\theta-{\rm C_{rr}}\sin\theta\right)\sin\theta+\left(\sin\theta+{\rm C_{rr}}\cos\theta\right)\cos\theta\Big)
		\\
		&= \frac{mg\,{\rm C_{rr}}}{\tfrac{1}{2}{\rm C_{d}A}\,\rho}\,.
		\end{align*}
	Hence,
	\begin{equation}
		\label{eq:d_Vsintheta_dtheta}
		\frac{\rm d}{{\rm d}\theta}\left(V(\theta)\sin\theta\right) 
		= 
		V(\theta)\,\frac{3\,V^{2}(\theta)\cos\theta+\dfrac{mg\,{\rm C_{rr}}}{\tfrac{1}{2}{\rm C_{d}A}\,\rho}}{3\,V^{2}(\theta)+A(\theta)}>0\,,
		\quad{\rm for}\quad
		0\leqslant\theta\leqslant\frac{\pi}{2}\,,
	\end{equation}
	as required.
	Since expression~\eqref{eq:d_Vsintheta_dtheta} is positive, VAM is\,---\,within the slope range of interest\,---\,a monotonically increasing function of $\theta$\,.
\end{proof}
From this theorem, it follows that the steeper the incline that a rider can climb with a given power, the greater the value of the corresponding VAM, as illustrated in Figure~\ref{fig:FigThm1and2}.
%%%%%%%%%%%%%%%%%%%%%%%%%
\subsection{VAM as a function of cadence and gear ratio}
%%%%%%%%%%%%%%%%%%%%%%%%%
In view of Theorem~\ref{thm:Theorem}, we conclude that, for a fixed power, there is no specific value of $\theta$ that maximizes expression~\eqref{eq:VAM} and hence, there is no local maximum for VAM. 
In other words\,---\,based on expression~\eqref{eq:P}, alone\,---\,we cannot specify the optimal slope to maximize VAM.
However, this is not the case if we include other considerations, such as
\begin{equation}
	\label{eq:fv}
	P = f\,v\,,
\end{equation}
where $f$ is the force applied to the pedals and $v$ is the circumferential speed of the pedals.
These are measurable quantities; they need not be modelled in terms of the surrounding conditions, such as the mass of the rider, strength and direction of the wind or the steepness of an uphill.
Herein, the circumferential speed is
\begin{equation}
	\label{eq:v}
	v=2\pi\,\ell_c\,c\,,
\end{equation}
where $\ell_c$ is the crank length, $g_r$ is the gear ratio, $r_w$ is the wheel radius and $c$ is the cadence, and the corresponding ground speed is
\begin{equation}
	\label{eq:V}
	V=2\pi\,r_{w}\,c\,g_r\,;
\end{equation}
these expressions allow us to examine VAM as a function of the cadence and gear ratio.

To examine VAM as a function of cadence and gear ratio, we require typical values for the parameters that constitute the power model.
Thus, we consider~\citet[Section~3]{DanekEtAl2021}, which details a numerical optimization process for the estimation of $\rm C_dA$\,, $\rm C_{rr}$ and $\lambda$ using model~\eqref{eq:PV} along an 8-kilometre climb of nearly constant inclination in Northwest Italy.
The pertinent values are as follows\footnote{%
	For the convenience of the reader, all numerical values resulting from operations on model parameters are rounded to four significant figures.
}.
For the bicycle-cyclist system, $m=111\,{\rm kg}$, $\rm C_{d}A=0.2702\,{\rm m}{}^2$, $\rm C_{rr}=0.01298$\,, $\lambda=0.02979$\,.
For the bicycle, $\ell_c = 0.175\,{\rm m}$, $g_{r}=1.5$ and $r_w = 0.335\,{\rm m}$.
For the external conditions, $g=9.81\,{\rm m/s^2}$ and $\rho=1.168\,{\rm kg/m^3}$, which corresponds to an altitude of 400\,m.

Let us suppose that the power that the cyclist maintains during a climb a power of $P_0=375\,{\rm W}$ and that the cadence is $c = 1$\,, which is 60\,rpm, and\,---\,in accordance with expression~\eqref{eq:v}\,---\, results in $v = 1.100\,{\rm m/s}$.
The force that the rider must apply to the pedals is $f = 341.0\,{\rm N}$.
In accordance with expression~\eqref{eq:V}, the ground speed is $V=3.157\,{\rm m/s}$ and, inserting $V$ into expression~\eqref{eq:PowerConstraint}, we obtain
\begin{equation*}
	46.00\cos\theta + 3544\sin\theta - 369.9 = 0\,,
\end{equation*}
whose solution is $\theta = 0.09160$\,rad\,$=5.248^\circ$\,, which results in ${\rm VAM} = 1040$\,m/h; herein, $\theta$ corresponds to a grade of 9.184\,\%\,. 

Now, let us consider the effect on this value of VAM due to varying the cadence and gear ratio.
To do so, we use expression~(\ref{eq:V}) in expressions~\eqref{eq:VAM} and \eqref{eq:PowerConstraint} to obtain
\begin{equation}
\label{eq:VAM(c,g_r,theta)}
	{\rm VAM}=7578\,c\,g_r\sin\theta
\end{equation}
and 
\begin{equation}
	\label{eq:F(c,g_r,theta)} 
	1.517\,(c\,g_r)^3 + 30.66\,c\,g_{r}\cos\theta + 2362\,c\,g_{r}\sin\theta - 375.0 = 0\,,
\end{equation}
respectively.
In both expressions~(\ref{eq:VAM(c,g_r,theta)}) and (\ref{eq:F(c,g_r,theta)}), $c$ and $g_r$ appear  as a product.

For each $c\,g_r$ product, we solve equation~\eqref{eq:F(c,g_r,theta)} to obtain the corresponding value of~$\theta$\,, which we use in expression~\eqref{eq:VAM(c,g_r,theta)} to calculate the VAM.
We consider $c\,g_r\in[0.5\,,3.75]$\,; the lower limit represents $c=0.5$\,, which is 30\,rpm, and $g_r = 1$\,; the upper limit represents $c=1.5$\,, which is 90\,rpm, and $g_r = 2.5$\,.
Following expression~(\ref{eq:v}), the corresponding circumferential speeds are $v=0.5498$\,m/s and $v=1.649$\,m/s\,.
Since $P_0=375$\,W\,, in accordance with expression~(\ref{eq:fv}), the forces applied to the pedals are $f=682.1$\,N and $f=227.4$\,N\,, respectively.

%%%%%%%%%%%%%%%%%%%%%%%%%
\begin{figure}[h]
\centering
\includegraphics[scale=0.4]{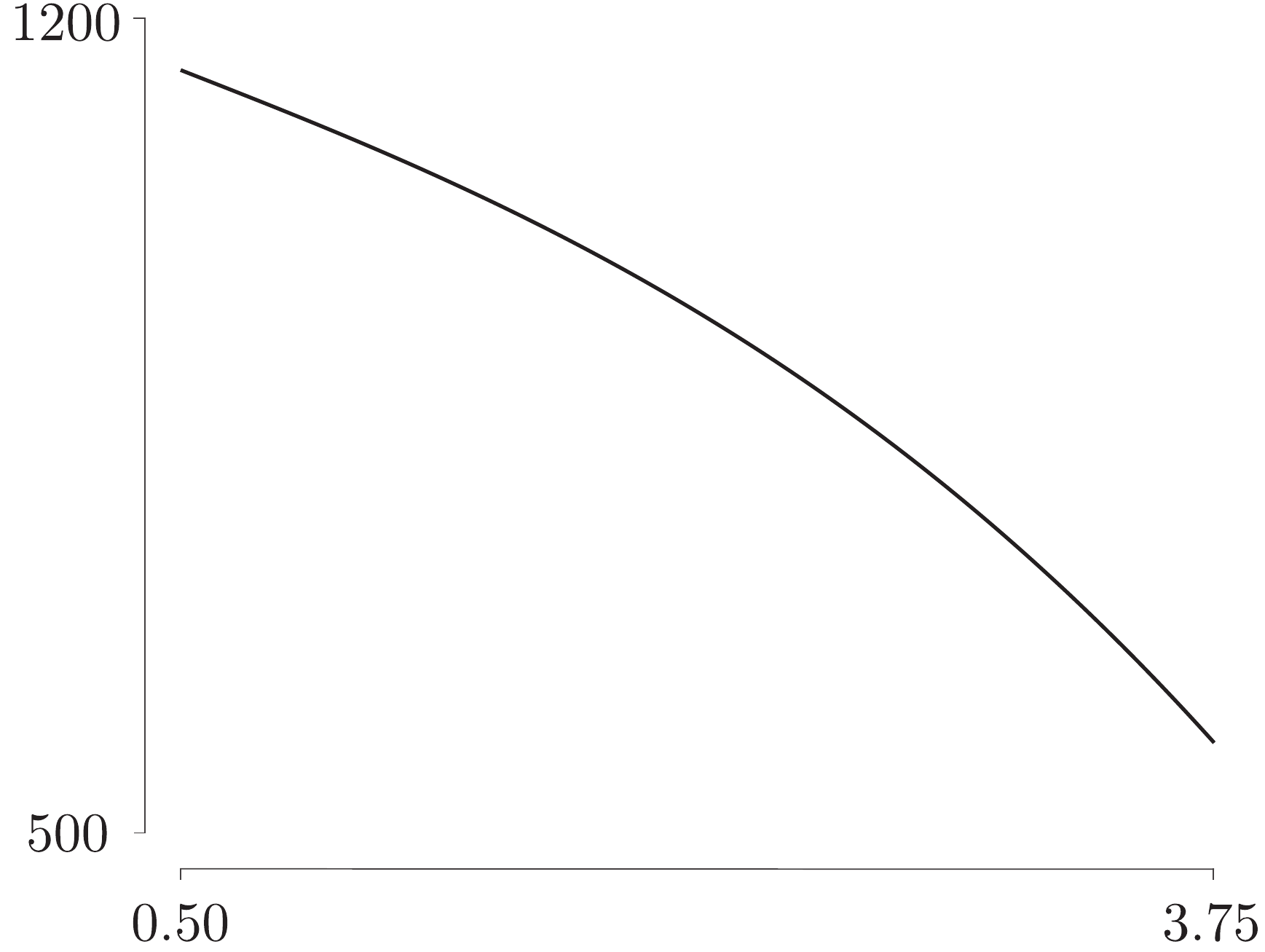}
\caption{\small%
VAM as function of $c\,g_r$\,, with $P_0=375$\,W}
\label{fig:VAM}
\end{figure}
%%%%%%%%%%%%%%%%%%%%%%%%%

Examining Figure~\ref{fig:VAM}, we conclude that lowering $c\,g_r$ increases the slope of the  climbable incline and, hence\,---\,for a given value of power\,---\,increases the VAM, as expected in view of Theorem~\ref{thm:Theorem}.

The VAM is determined not only by the power sustainable during a climb, but also the steepness of that climb.
As stated in Theorem~\ref{thm:Theorem}, the steeper the slope that can be climbed with a given power, the greater the VAM.
 Hence, as illustrated in Figure~\ref{fig:VAM}, the highest value corresponds to the lowest cadence-gear product.
 A maximization of VAM requires an optimization of the force applied to the pedals and their circumferential speed, along a slope, in order to maintain a high power.

%%%%%%%%%%%%%%%%%%%%%%%%%
\section{Discussion}
%%%%%%%%%%%%%%%%%%%%%%%%%

%%%%%%%%%%%%%%%%%%%%%%%%%
\begin{wrapfigure}{2}{0.525\textwidth}
\centering
\includegraphics[width=0.475\textwidth]{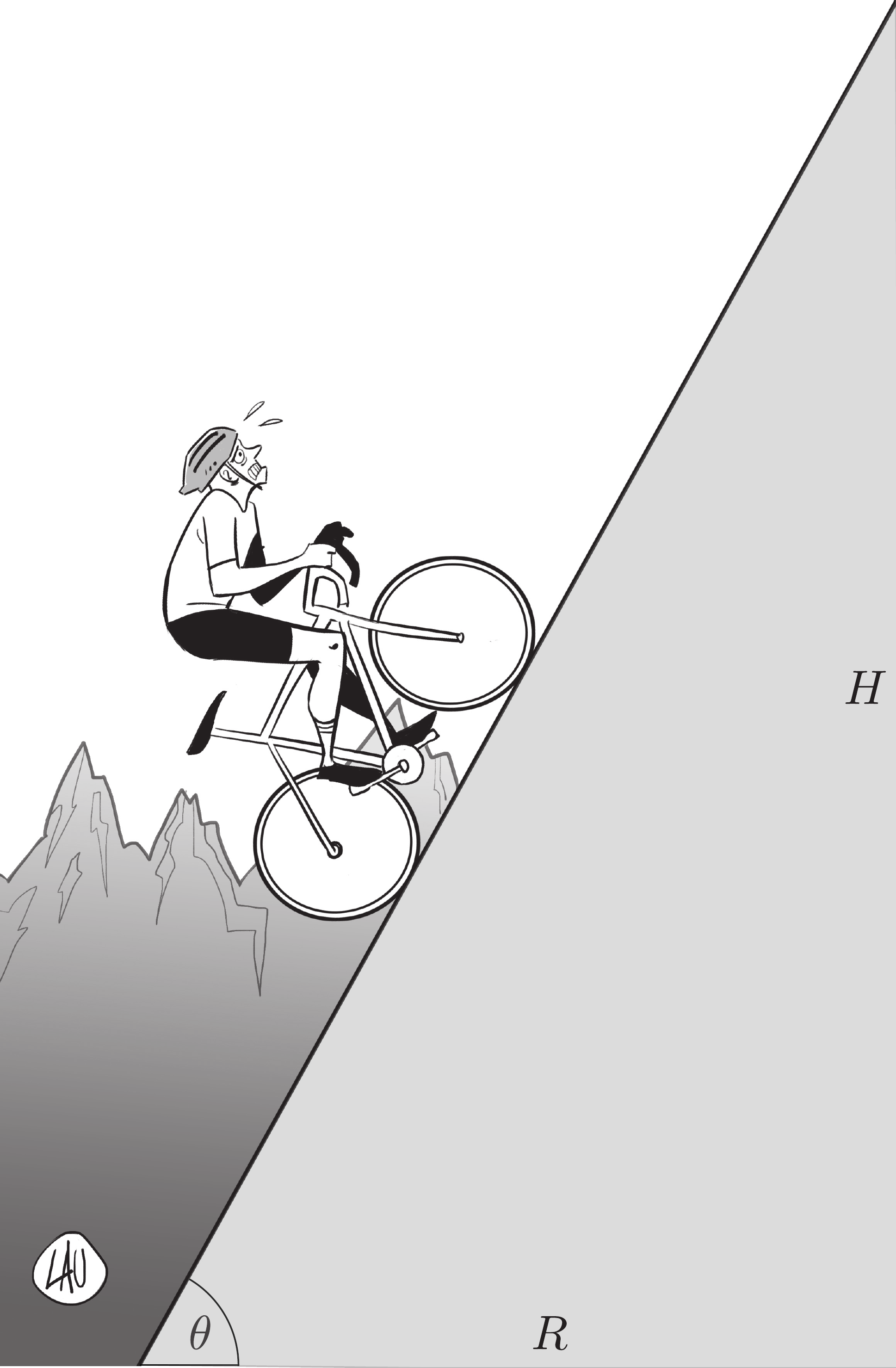}
\caption{\small%
	Illustration of Theorems~\ref{thm:Brachistochrone} and \ref{thm:Theorem}}
%\vspace{-1.5cm}
\label{fig:FigThm1and2}
\end{wrapfigure}
%\begin{figure}[h]
%	\centering
%	\includegraphics[width=0.5\textwidth]{FigThm1and2}
%	\caption{\small%
%		Illustration of Theorems~\ref{thm:Brachistochrone} and \ref{thm:Theorem}}
%	\label{fig:FigThm1and2}
%\end{figure}
%%%%%%%%%%%%%%%%%%%%%%%%%

Let us illustrate Theorems~\ref{thm:Brachistochrone} and~\ref{thm:Theorem} in the context of climbs chosen for Everesting, which consists of riding an uphill repeatedly in a single activity until reaching the total ascent of 8,848~metres.

In June of 2020, Lachlan Morton \citep{MortonRecord} established an Everesting record.
To do so\,---\,in a manner  consistent with Theorem~\ref{thm:Theorem}\,---\,he chose a steep slope, whose average gradient is~$11.1\%$\,.
This record was established\,---\,in a manner consistent with Theorem~\ref{thm:Brachistochrone} and its corollary\,---\,on a hill whose average and maximum gradients are $11.1\%$ and $13.2\%$\,, respectively.
A more recent record\,---\,by Sean Gardner~\citep{GardnerRecord} in October of 2020\,---\,was established on a hill whose average and maximum gradients are $15.5\%$ and $22.6\%$\,, respectively.
According to Theorem~\ref{thm:Brachistochrone} and its corollary, which are illustrated in Figure~\ref{fig:FigThm1and2}, a more steady uphill would have been preferable.
Let us emphasize that Theorem~\ref{thm:Brachistochrone} and its corollary are statements of mathematical physics, whose implications might be adjusted for an optimal strategy.
For instance, it might be preferable\,---\,from a physiological viewpoint\,---\,to vary the slope to allow periods of respite.

Also, let us revisit expression~\eqref{eq:VAM(c,g_r,theta)}, whose general form, in terms of $c\,g_r$\,, is
\begin{equation*}
8\,\pi^{3}r_{w}^{3}(c\,g_r)^{3} + \frac{2\,\pi\,r_{w}\,m\,g\,(\sin\theta+{\rm C_{rr}}\,\cos\theta)}{\tfrac{1}{2}{\rm C_{d}A}\,\rho}\,c\,g_{r} - \frac{P_{0}(1-\lambda)}{\tfrac{1}{2}{\rm C_{d}A}\,\rho} = 0
\,.
\end{equation*}
Using common values for most quantities, namely, \mbox{${\rm C_{d}A} = 0.27\,{\rm m}^2$}, \mbox{${\rm C_{rr}} = 0.02$}, \mbox{$\lambda = 0.03$}, \mbox{$\rho = 1.2\,{\rm kg/m}^3$}, \mbox{$r_{w} = 0.335\,{\rm m}$}, and, for convenience of a concise expression, below, we invoke the small-angle approximation, which results\,---\,in radians\,---\,in $\sin\theta\approx\theta$ and $\cos\theta\approx1-\theta^{2}/2$\,, to obtain
\begin{equation*}
\theta = 50-15.5610\sqrt{10.3326+\frac{0.0302\,c\,g_r-0.0194\,P_0}{m\,c\,g_r}}\,,
\end{equation*}
where $m$ is the mass of the bicycle-cyclist system, and the product, $c\,g_r$\,, might be chosen to correspond to the maximum sustainable power,~$P_0$\,.
This product is related to power by expression~(\ref{eq:fv}), since $c$ is proportional to $v$\,, and $g_r$ to $f$\,.
For any value of $c\,g_r$\,, there is a unique value of $P_0$\,, which depends only on the power of a rider.

As expected, and as illustrated in Figure~\ref{fig:FigTheta}, the slope decreases with~$m$\,, {\it ceteris paribus}.
Also, as expected, and as illustrated in Figure~\ref{fig:FigVAM}, so does VAM.
For both figures, the abscissa, expressed in kilograms, can be viewed as the power-to-weight ratio from $6.25$ to $3.41$\,, whose limits represent the values sustainable\,---\,for about an hour\,---\,by an elite and a moderate rider, respectively.

%%%%%%%%%%%%%%%%%%%%%%%%%
\begin{figure}[h]
	\centering
	\begin{subfigure}[b]{0.49\textwidth}
		\centering
		\includegraphics[scale=0.4]{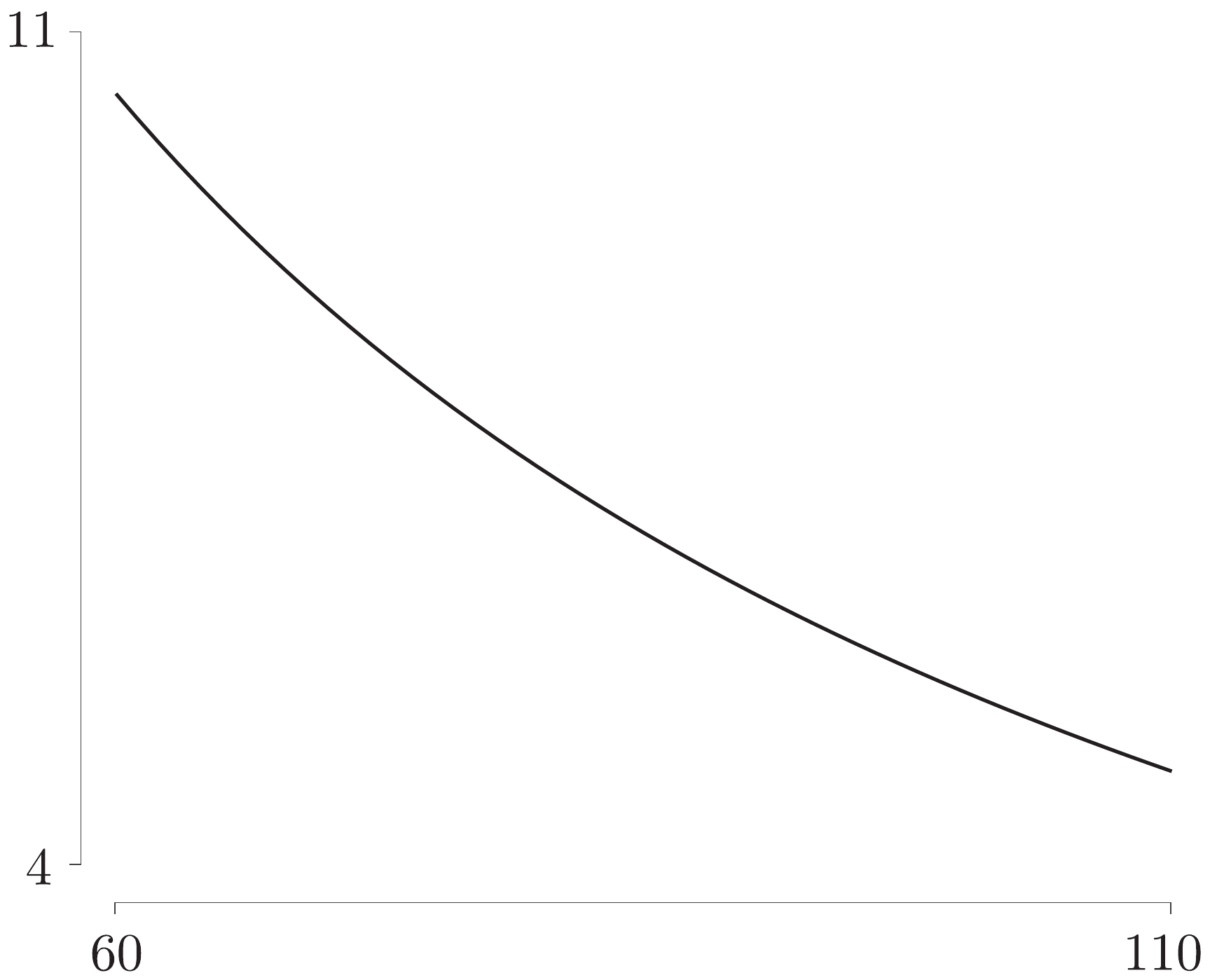}
		\caption{\small Slope}
		\label{fig:FigTheta}
	\end{subfigure}
	\hfill
	\begin{subfigure}[b]{0.49\textwidth}
		\centering
		\includegraphics[scale=0.4]{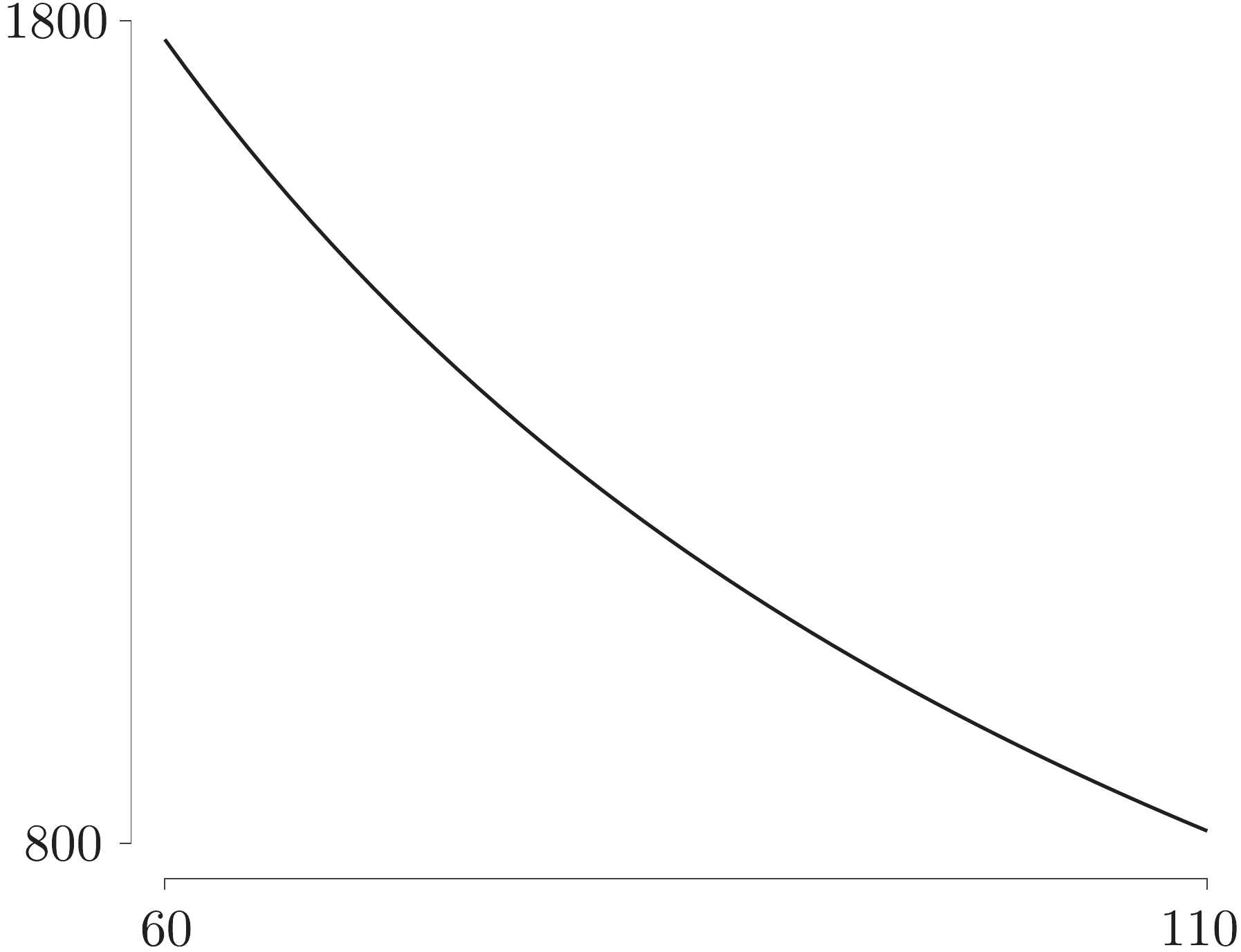}
		\caption{\small VAM}
		\label{fig:FigVAM}
	\end{subfigure}
	\caption{\small
		Slope and VAM as functions of $m$\,, with $P_0=375$\,W and $c\,g_r=2.25$}
	\label{fig:CronoMaps}
\end{figure}
%%%%%%%%%%%%%%%%%%%%%%%%%

%%%%%%%%%%%%%%%%%%%%%%%%%
\section{Conclusions}
%%%%%%%%%%%%%%%%%%%%%%%%%
The presented results constitute a mathematical-physics background upon which various strategies for the VAM maximization can be examined.
Following fundamental properties of the calculus of variations and the rigor of mathematical proof, we prove that\,---\,for a fixed power\,---\,the quickest ascent between two points is a straight line and that VAM increases monotonically with slope.
The latter conclusion yields that there is no specific slope value that maximizes VAM.
Rather, we demonstrate that its maximization is determined by the maximum power sustainable by the cyclist during a climb.
It follows that the lower the product of cadence and gear ratio, $c\,g_r$, the greater the slope of the climbable incline for a fixed power and, in turn, the greater the VAM.

In the context of physiological considerations, one could examine the maximum sustainable power as a function of both the force applied to the pedals and their circumferential speed.
For instance, depending on the physical ability of the cyclist, one might choose a less steep slope to allow a higher $c\,g_r$ product, whose value allows to generate and sustain a higher power.
One might also consider a very short and an exceptionally steep slope, while keeping in mind the issue of measurement errors due to the shortness of the segment itself.
In all cases, however, the slope needs to be constant.

%%%%%%%%%%%%%%%%%%%%%%%%%
\section*{Acknowledgements}
%%%%%%%%%%%%%%%%%%%%%%%%%
We wish to acknowledge Elena Patarini, for her graphic support, Roberto Lauciello, for his artistic contribution, and Favero Electronics for inspiring this study by their technological advances and for supporting this work by providing us with their latest model of Assioma Duo power meters.
%%%%%%%%%%%%%%%%%%%%%%%%%
\bibliographystyle{apa}
\bibliography{BSSbici6.bib}
%%%%%%%%%%%%%%%%%%%%%%%%%
\end{document}